\newtheorem{algorithm}{Algorithm}[section]
\newtheorem{theorem}{Theorem}[section]
\newtheorem{corollary}{Corollary}[section]
\newcommand{\bT}{\boldsymbol\theta}
\begin{document}
\sloppy

\title{\bf{A Sequential Quadratic Hamiltonian-Based Estimation Method for Box-Cox Transformation Cure Model}}
\author{\bf{Phuong Bui${}^{1}$, Varun Jadhav${}^{1}$, Suvra Pal${}^{1,2,\footnote{Corresponding author. Department of Mathematics, University of Texas at Arlington, 411 S. Nedderman Dr., Arlington, TX 76019, United States. E-mail address: suvra.pal@uta.edu Tel.:817-272-7163.\newline \indent }}$, and Souvik Roy${}^{1}$}  \\\\
${}^{1}$Department of Mathematics, University of Texas at Arlington, \\
Arlington, TX 76019, United States.\\
${}^{2}$Division of Data Science, College of Science, University of Texas at Arlington, \\
Arlington, TX 76019, United States}
\date{}

\maketitle

\begin{abstract}

\noindent We propose an enhanced estimation method for the Box-Cox transformation (BCT) cure rate model parameters by introducing a generic maximum likelihood estimation algorithm, the sequential quadratic Hamiltonian (SQH) scheme, which is based on a gradient-free approach. We apply the SQH algorithm to the BCT cure model and, through an extensive simulation study, compare its model fitting results with those obtained using the recently developed non-linear conjugate gradient (NCG) algorithm. Since the NCG method has already been shown to outperform the well-known expectation maximization algorithm, our focus is on demonstrating the superiority of the SQH algorithm over NCG. First, we show that the SQH algorithm produces estimates with smaller bias and root mean square error for all BCT cure model parameters, resulting in more accurate and precise cure rate estimates. We then demonstrate that, being gradient-free, the SQH algorithm requires less CPU time to generate estimates compared to the NCG algorithm, which only computes the gradient and not the Hessian. These advantages make the SQH algorithm the preferred estimation method over the NCG method for the BCT cure model. Finally, we apply the SQH algorithm to analyze a well-known melanoma dataset and present the results.

\end{abstract}

\noindent {\it Keywords:} Cure rate; Long-term survivors; Box-Cox Transformation; Optimization; Cutaneous melanoma

\section{Introduction}

Recent advancements in treating diseases like cancer and heart disease have extended the lives of many patients. Among these, a group known as recurrence-free survivors includes those who respond well to treatment and remain in long-term remission. A subset of these individuals, whose disease is no longer detectable and poses no harm, are often referred to as long-term survivors or “cured". However, determining which of these recurrence-free survivors can truly be considered cured presents a challenge, especially when some patients who might still be at risk of disease recurrence live through the duration of the study measuring survival rates. The aim of “cure rate" modeling \citep{PengYu21} is to resolve this uncertainty by providing an unbiased estimate of the proportion of patients who are truly cured. Accurately estimating the cure rate for a specific treatment is crucial for tracking patient survival trends and is clinically important, as it helps identify patients who can avoid the additional risks associated with high-intensity treatments.

The development of cure rate models traces back to the pioneering work of \cite{Boa49}, which was later refined by \cite{Ber52}. This model, known as the mixture cure rate model (MCM), defines the population survival function (or long-term survival function) for a time-to-event variable $Y$ as:
\begin{equation}
S_p(y) = p_0 + (1-p_0)S_s(y),
\label{mix}
\end{equation}
where $p_0$ represents the proportion of cured individuals (or the cure rate) and $S_s(\cdot)$ is the survival function of the susceptible group (a proper survival function). Although MCM-based models have been widely applied in cancer research \citep{Kim09,Lam10}, cardiovascular diseases \citep{Sev21}, and infectious diseases \citep{Ped22}, they fall short in addressing practical situations where multiple latent causes contribute to the occurrence of an event. For example, various malignant cells contribute to the development of a cancerous tumor. As a result, these models lack clinical relevance since they fail to accurately represent the biological mechanisms behind an event's occurrence.

To address this limitation, promotion time cure model (PCM) was introduced, which incorporates the progression times of latent causes \citep{Che99,Tso03}. In the PCM model, the number of latent causes is assumed to follow a Poisson distribution, and the long-term survival function is expressed as:
\begin{equation}
S_p(y) = e^{-\eta\{1-S(y)\}},
\label{prom}
\end{equation}
where $\eta$ represents the mean number of causes and $S(\cdot)$ is the common survival function of the progression time for each cause. In this model, the cure rate is expressed as $e^{-\eta}$. In real-world scenarios, the cure rate can vary across individuals, so it must depend on subject-specific characteristics or covariates. Let $\bm x$ represent the covariate vector associated with the cure rate. To analyze the effect of covariates on the cure rate, we can relate $p_0$ in the MCM model to $\bm x$ via a logistic link function:
\begin{equation}
p_0 = p_0(\bm x) = \frac{1}{1+\exp(\bm x^\prime \bm \beta)},
\end{equation}
where $\bm\beta$ is the vector of regression coefficients. Similarly, in the PCM model, we can link $\eta$ to $\bm x$ through a log-linear function:
\begin{equation}
\eta= \eta(\bm x) = \exp(\bm x^\prime \bm \beta).
\end{equation}
In this context, readers interested in more advanced methods can explore recent machine learning-based approaches for modeling the cure rate to enhance predictive accuracy \citep{Ase23,PalSVM23,PalSMMR23,Paletal23}. Note that the PCM model struggles with over- and under-dispersion, common issues when modeling count data. To address this, the Conway-Maxwell Poisson (COM-Poisson) cure model was developed, offering a flexible framework for dealing with both over- and under-dispersion relative to the Poisson distribution \citep{Bal12,Bal13,Bal15,Bal16,Pal17b}. Additionally, as an extension of cure models in the latent cause framework, researchers have explored scenarios where causes may be eliminated or destroyed after a certain period (such as following an initial treatment phase). Several destructive cure models have been developed, along with likelihood-based inference procedures to handle these cases \citep{Pal16,Pal17c,Pal17a,JodiSIM23,Jodi22}.

In this paper, we explore the unification of the MCM and PCM models through a class of cure models based on the Box-Cox transformation (BCT) applied to the population survival function. This class was first proposed by \cite{Yin05}. The BCT cure rate model was recently investigated by \cite{PalRoy23}, where the authors introduced a new estimation method using a non-linear conjugate gradient (NCG) algorithm with an efficient line search technique. This NCG-based approach enabled the simultaneous maximization of all parameters, including the BCT index parameter $\alpha$. This contrasts with the expectation maximization (EM) algorithm developed by \cite{Pal18c} for the BCT cure model, where simultaneous maximization of all parameters was not feasible, and $\alpha$ was estimated using a profile likelihood approach within the EM algorithm. In the original work of \cite{Yin05}, the parameter $\alpha$ was fixed, and Bayesian inference was used. In their development of the NCG algorithm, \cite{PalRoy23} demonstrated that the NCG method resulted in lower bias and significantly reduced root mean square error in the parameter estimates associated with the cure rate, when compared to the EM algorithm. This leads to more accurate and precise inference on the cure rate, and the NCG method was also found to be computationally more efficient than the EM algorithm \citep{PalRoy21,PalRoy22}. 

We propose a novel estimation method for the BCT cure model, called the sequential quadratic Hamiltonian (SQH) method. The SQH method is based on the Pontryagin's maximum principle (PMP) characterization and is robust and fast converging. Since the PMP principle has a pointwise formulation and, thus, a local structure, the SQH algorithm implements local updates of the unknown parameters. It does not require gradient computation and is expected to be even more computationally efficient. The SQH scheme has been used in past in the context of a Liouville mass transport problem \citep{Roy_Liouv}, medical imaging problems \citep{Dey2024,Roy2021,Roy2023} and parabolic optimal control problems \citep{Breitenbach2019a}, and has been demonstrated to be effective optimization solvers. Since the NCG method has already been shown to outperform the EM algorithm, our focus is on demonstrating the superiority of the proposed SQH algorithm over the NCG. We aim to highlight the advantages of the SQH method and believe it will encourage researchers to adopt it as a new tool for parameter estimation, both in the context of cure models and in other applications. To the best of our knowledge, this is the first time an SQH-based method has been developed in the context of cure models.

The rest of this paper is organized as follows. Section 2 introduces the BCT cure rate model. In Section 3, we describe the observed data structure and the likelihood function, and outline the steps of the SQH algorithm as a general method for estimation. We also examine some of its theoretical properties. Section 4 focuses on using the SQH algorithm to estimate the parameters of the BCT cure model, presenting results from an extensive simulation study. We demonstrate the accuracy of the SQH algorithm in estimating all parameters and emphasize its advantages over the NCG algorithm. Section 5 applies the SQH algorithm to analyze a well-known melanoma dataset. Finally, Section 6 concludes the paper with some remarks and explores potential avenues for future research.

\section{BCT cure rate model}

The BCT applied to a variable $z$ with index parameter $\alpha$ is defined as:
\begin{equation}
G(z,\alpha)=\begin{cases}
\frac{z^\alpha -1}{\alpha}, & \alpha \neq 0 \\
\log(z), & \alpha = 0.
\end{cases}
\label{BC} 
\end{equation}
Let the population survival function depend on a set of covariates $\boldsymbol{x}$, denoted by $S_p(\cdot|\boldsymbol{x})$. Applying the BCT to the population survival function, as suggested by \cite{Yin05}, the BCT cure rate model is given by:
\begin{equation}
G(S_p(y|\boldsymbol x),\alpha) = -\phi(\alpha,\boldsymbol x)F(y), \ \ 0\leq\alpha \leq 1.
\label{G}
\end{equation}
In equation \eqref{G}, $y$ represents the observed lifetime, $\alpha$ is the index parameter, $F(\cdot)$ is any valid distribution function, and $\boldsymbol{x}$ represents a set of covariates introduced in the model via a general covariate structure $\phi(\alpha,\boldsymbol{x})$, defined as:
\begin{equation}
\phi(\alpha,\boldsymbol x) = \begin{cases}
\frac{\exp(\boldsymbol x^\prime\boldsymbol\beta)}{1+\alpha\exp(\boldsymbol x^\prime\boldsymbol\beta)}, & \ \ \ \ 0<\alpha\leq 1 \\ 
\exp(\boldsymbol x^\prime\boldsymbol\beta), & \ \ \ \ \alpha = 0.
\end{cases}
\label{phi}
\end{equation}
Here, $\boldsymbol{\beta}$ denotes the set of regression coefficients. Substituting the BCT from equation \eqref{BC} into \eqref{G}, we obtain an expression for the population survival function as:
\begin{equation}
S_p(y|\boldsymbol x) = \begin{cases}
\{1-\alpha\phi(\alpha,\boldsymbol x)F(y)\}^{\frac{1}{\alpha}}, & 0 < \alpha \leq 1\\
\exp\{-\phi(0,\boldsymbol x)F(y)\}, & \alpha = 0.
\end{cases}
\label{Sp}
\end{equation}
The cure rate, which represents the long-term survival probability, is defined as $p_0(\boldsymbol{x}) = \lim_{y \to \infty} S_p(y|\boldsymbol{x})$. Its explicit expression is:
\begin{eqnarray}
p_0(\boldsymbol x)&=&\begin{cases}
\{1-\alpha\phi(\alpha,\boldsymbol x)\}^{\frac{1}{\alpha}}, & 0 < \alpha \leq 1\\
\exp\{-\phi(0,\boldsymbol x)\}, & \alpha = 0.
\end{cases}
\label{p0}
\end{eqnarray}
The population density function can be derived as $f_p(y|\boldsymbol{x}) = S_p'(y|\boldsymbol{x})$, and its explicit form is:
\begin{eqnarray}
f_p(y|\boldsymbol x) = \begin{cases}
S_p(y|\boldsymbol x)\phi(\alpha,\boldsymbol x)f(y)\{1-\alpha\phi(\alpha,\boldsymbol x)F(y)\}^{-1}, & 0 < \alpha \leq 1\\ 
S_p(y|\boldsymbol x)\phi(0,\boldsymbol x)f(y), & \alpha = 0,
\end{cases}
\label{fp}
\end{eqnarray} 
where $f(\cdot)$ is the density function corresponding to $F(\cdot)$. Notably, the BCT cure model in equation \eqref{Sp} reduces to the MCM in \eqref{mix} when $p_0 = p_0(\boldsymbol{x}) = \frac{1}{1 + \exp(\boldsymbol{x}^\prime\boldsymbol\beta)}$ and $S_s(y) = 1 - F(y)$. Additionally, the BCT cure model simplifies to the PCM in \eqref{prom} with $S(y) = 1 - F(y)$ and $\eta = \exp(\boldsymbol{x}^\prime\boldsymbol\beta)$. Therefore, the BCT cure model unifies two of the most commonly used cure rate models in the literature, offering a flexible class of transformation cure models. We focus on the range of $\alpha$ within the interval $[0,1]$, as it provides an intermediate modeling structure between the PCM ($\alpha = 0$) and MCM ($\alpha = 1$) models.

While the MCM and PCM models are widely used and extensively studied in the literature, the BCT cure model has not been explored in depth. Only a few studies have examined the BCT cure model since the original work of \cite{Yin05}. Notably, \cite{YinIb05} introduced a new class of survival regression models that connect a family of both improper and proper population survival functions through the BCT applied to the population hazard function and a proper density function. \cite{Zen06} developed a class of transformation cure models, encompassing proportional hazards and proportional odds cure models, and proposed a recursive algorithm for parameter estimation. \cite{Sunetal11} extended the BCT cure model to recurrent event data and developed a profile pseudo-partial likelihood method for parameter estimation. \cite{Peng-Xu12} provided the first novel biological interpretation of the BCT cure model. \cite{Choi12} introduced a systematic martingale approach for determining the index parameter in a semi-parametric transformation cure model. \cite{Kou17} proposed a family of transformation cure models that includes both classical and destructive cure models. As previously mentioned, \cite{Pal18c} considered the BCT cure model in a fully parametric framework and developed likelihood inference using the EM algorithm. \cite{Wang22} proposed an extended class of generalized gamma BCT cure models, addressing issues of model selection and discrimination. \cite{Mil22} introduced a new reparameterization for a family of transformation cure models, where the case of a zero cured proportion is not constrained at the boundary of the parameter space. More recently, \cite{JodiCS24} developed likelihood-based inference for the BCT cure model with interval-censored data; see also \cite{PalBarui24}.

\section{Estimation method: SQH algorithm}

\subsection{Form of the data and likelihood function}

We consider a scenario where time-to-event (or lifetime) data may be subject to right censoring. Let $T_i$ and $C_i$ represent the actual failure time and censoring time, respectively, for $i = 1, 2, \dots, n$, where $n$ denotes the sample size. The observed lifetime is then given by $Y_i = \min\{T_i, C_i\}$. Let $\delta_i$ be the right censoring indicator, taking the value 1 if the event time is observed and 0 if it is right censored. The observed data set is thus represented by $\boldsymbol{O} = \{(y_i, \delta_i, \boldsymbol{x}_i), i = 1, 2, \dots, n\}$, where $\boldsymbol{x}_i$ is the vector of covariates for the $i$-th subject. Assuming non-informative censoring, the observed data likelihood function can be written as:
\begin{eqnarray}
L(\boldsymbol\theta)= \prod_{i=1}^n \{f_p(y_i|\boldsymbol {x_i})\}^{\delta_i}\{S_p(y_i|\boldsymbol{x_i})\}^{1-\delta_i}, \label{eq:likelihood}
\end{eqnarray}
where $\boldsymbol{\theta} = (\boldsymbol{\beta}^\prime, \boldsymbol{\gamma}^\prime,\alpha)^\prime$ is the vector of unknown model parameters, with $\boldsymbol{\gamma}$ representing the parameter vector associated with $F(\cdot)$ in equation \eqref{Sp}. To estimate the optimal parameter set $\boldsymbol{\theta}$ that maximizes the likelihood function in \eqref{eq:likelihood}, we first take the natural logarithm of both sides of \eqref{eq:likelihood} to obtain the log-likelihood function:
\begin{equation}\label{eq:loglikelihood}
l(\boldsymbol\theta)= \sum_{i=1}^n[ \delta_i \log\{f_p(y_i|\boldsymbol {x_i})\}+{(1-\delta_i)}\log\{S_p(y_i|\boldsymbol{x_i})\}],
\end{equation}
where $S_p(\cdot|\boldsymbol{x}_i)$ and $f_p(\cdot|\boldsymbol{x}_i)$ are as defined in equations \eqref{Sp} and \eqref{fp}, respectively.

\subsection{Proposed SQH algorithm}

To obtain the unknown model parameter set $\bT$, we use the SQH method. In this regard, the maximization problem can be stated as: $$\max_{\boldsymbol\theta\in T_{ad}}~ l(\boldsymbol\theta),$$ where $T_{ad}$ is the set of admissible values of $\bm\theta$, defined as: 

$$T_{ad} = \lbrace \boldsymbol\theta : \boldsymbol\beta\in\mathbb{R}^p,\boldsymbol\gamma\in (\mathbb{R^+})^q, ~ 0\leq \alpha \leq 1 \rbrace.$$ Thus, in general, the dimension of $\bm \theta$ is $p+q+1$. The starting point of our SQH method is the augmented Hamilton function, defined as follows:
$$l_\epsilon(\bT,\tilde{\bT}) = l(\bT) - \epsilon \|\bT - \tilde{\bT}\|^2_{l^2}.
\label{eq:augmented_Hamiltonian}
$$
Here $\epsilon > 0$ is a penalization parameter that is adaptively adjusted at each iteration of the SQH process, $\|\cdot\|_{l^2}$ is the standard vector Euclidean norm, and $\tilde{\bT}$ represents the previous approximations of the parameter set $\bT$. Specifically, $\epsilon$ is increased when a sufficient increase in the function $l(\cdot)$ is not observed, and decreased when $l(\cdot)$ increases adequately. The purpose of the quadratic term $\epsilon \|\bT - \tilde{\bT}\|^2_{l^2} $ is to ensure that the maximizer of $l_\epsilon$ for each individual parameter in $\bT$ remain close to the prior values $\tilde{\bT}$, especially when $\epsilon$ is large. The SQH algorithm is outlined below:

\begin{algorithm}[SQH scheme]\label{eq:SQH}\
\begin{enumerate}
\item  Choose $\epsilon > 0, \kappa >0, \lambda > 1$ (step-up parameter), $\zeta \in (0,1)$ (step-down parameter), $\rho \in (0,\infty)$, $\bT^0 \in T_{ad}$ (Initial guess), and $ ~MaxIter$ (denoting the maximum number of iterations).
\item Set $k = 0$.
\item Choose $\theta_i\in T_{i,ad}$, where $T_{i,ad}$ denotes the admissible set for the $i$-th $(i=1,2,\cdots,p+q+1)$ element in $\bm\theta$, such that 
\begin{eqnarray*}
l_\epsilon((\theta_1,\theta_2^k,\cdots,\theta_{p+q+1}^k),\bT^k) &=&  \mathop{\max_{v_1\in T_{1,ad}}} l_\epsilon((v_1,\theta_2^k,\cdots,\theta_{p+q+1}^k),\bT^k)\\
l_\epsilon((\theta_1^k,\theta_2,\cdots,\theta_{p+q+1}^k),\bT^k) &=&  \mathop{\max_{v_2\in T_{2,ad}}} l_\epsilon((\theta_1^k,v_2,\cdots,\theta_{p+q+1}^k),\bT^k),\\
& \vdots \\
l_\epsilon((\theta_1^k,\theta_2^k,\cdots,\theta_{p+q+1}),\bT^k) &=&  \mathop{\max_{v_{p+q+1}\in T_{p+q+1,ad}}} l_\epsilon((\theta_1^k,\theta_2^k,\cdots,v_{p+q+1}),\bT^k).
\end{eqnarray*}
\item Set $\tau = \|\bT-\bT^k\|^2_{l_2}.$
\item If $l(\bT) - l(\bT^k) < \rho \tau$ (Not sufficient increase), choose $\epsilon = \lambda \epsilon$ and go back to Step 3 to recompute $\bT$.
\item If $l(\bT) - l(\bT^k) \geq  \rho \tau$ (Sufficient increase), choose $\epsilon = \zeta \epsilon$ and assign new iterate $\bT^{k+1} = \bT$ obtained from Step 3.
\item Set $k = k+1$
\item If $\tau < \kappa$ or $k > MaxIter$, STOP and return optimal $\bT^k$. Else, go to Step 3.   
\item end
\end{enumerate}
\end{algorithm}

\begin{theorem}
\label{theoremSQH}
Let $\left(\bT^k\right)$ and $\left(\bT^{k+1}\right)$
be generated by the SQH method outlined in Algorithm \ref{eq:SQH}, where both $\bT^{k+1}$ and $\bT^k \in T_{ad}$. Then, the following holds: 
\begin{equation}\label{eq:ineq}
l(\bT^{k+1})-l(\bT^k)\geq\epsilon
\,   \|  \bT^{k+1}-\bT^k \|^{2}_{l^2}.
\end{equation}
\end{theorem}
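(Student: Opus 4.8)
The plan is to reduce the whole statement to a single monotonicity property of the augmented Hamiltonian, namely that the accepted update does not decrease $l_\epsilon(\cdot,\bT^k)$ relative to the current iterate:
$$l_\epsilon(\bT^{k+1},\bT^k)\ \ge\ l_\epsilon(\bT^k,\bT^k).$$
Granting this, everything else is immediate. The quadratic penalty vanishes at the base point, so $l_\epsilon(\bT^k,\bT^k)=l(\bT^k)$; writing $l_\epsilon(\bT^{k+1},\bT^k)=l(\bT^{k+1})-\epsilon\|\bT^{k+1}-\bT^k\|^2_{l^2}$ and rearranging yields exactly \eqref{eq:ineq}. Thus the entire proof hinges on the one inequality $l_\epsilon(\bT^{k+1},\bT^k)\ge l(\bT^k)$.

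To establish that inequality from the construction in Step 3 of Algorithm \ref{eq:SQH}, I would argue by telescoping over the coordinates. Introduce the intermediate points $\bT^{(0)}=\bT^k$ and, for $i=1,\dots,p+q+1$, let $\bT^{(i)}$ agree with the update in its first $i$ coordinates and with $\bT^k$ in the remaining ones, so that $\bT^{(p+q+1)}=\bT^{k+1}$ and consecutive points differ in a single coordinate. For each $i$, the defining coordinate-wise maximization guarantees that the chosen $\theta_i$ is at least as good, measured in $l_\epsilon(\cdot,\bT^k)$, as retaining $\theta_i^k$; comparing $\bT^{(i)}$ with $\bT^{(i-1)}$ and using that the competing choice $\theta_i=\theta_i^k$ exactly reproduces $\bT^{(i-1)}$ gives $l_\epsilon(\bT^{(i)},\bT^k)\ge l_\epsilon(\bT^{(i-1)},\bT^k)$. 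Chaining these from $i=1$ to $p+q+1$ telescopes to $l_\epsilon(\bT^{k+1},\bT^k)\ge l_\epsilon(\bT^k,\bT^k)=l(\bT^k)$, closing the argument.

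The main obstacle is the coupling among the coordinates. Since $l$ is not separable, the per-coordinate maximizers of Step 3 — each computed with the remaining entries frozen at $\bT^k$ — are not automatically a joint maximizer of $l_\epsilon(\cdot,\bT^k)$, and the telescoping above tacitly wants each single-coordinate step to be optimal against the running, partially updated configuration rather than against $\bT^k$. Reconciling these two viewpoints is the delicate point: one must verify that the penalty $\epsilon\|\bT-\tilde{\bT}\|^2_{l^2}$, which does split additively across coordinates, controls the cross-coordinate change in $l$ so that $l_\epsilon(\bT^{(i)},\bT^k)\ge l_\epsilon(\bT^{(i-1)},\bT^k)$ genuinely holds at each step. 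I expect the bulk of the effort to go into this bookkeeping — keeping careful track of which entries are ``old'' and which are ``new'' at every stage — and into confirming that the iterate to which \eqref{eq:ineq} is applied is precisely the accepted $\bT^{k+1}$ produced in Steps 5--6 after the adaptive enlargement of $\epsilon$, so that the rejected trial points require no separate treatment.
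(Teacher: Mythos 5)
Your reduction---expand $l_\epsilon(\bT^{k+1},\bT^k)=l(\bT^{k+1})-\epsilon\|\bT^{k+1}-\bT^k\|^2_{l^2}$, use $l_\epsilon(\bT^k,\bT^k)=l(\bT^k)$, and rearrange---is exactly the paper's proof; the paper simply asserts $l_\epsilon(\bT^{k+1},\bT^k)\ge l_\epsilon(\bT^k,\bT^k)$ ``after Step 6'' and stops. But the difficulty you flag in your final paragraph is a genuine gap, and the telescoping you sketch does not close it. In Step 3 each coordinate $\theta_i$ is maximized with \emph{all other} coordinates frozen at their old values $\theta_j^k$ (a Jacobi-type simultaneous update), so the comparison your chain needs, $l_\epsilon(\bT^{(i)},\bT^k)\ge l_\epsilon(\bT^{(i-1)},\bT^k)$, is only guaranteed for $i=1$, where $\bT^{(i-1)}=\bT^k$ coincides with the frozen configuration against which the maximization was performed. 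For $i\ge 2$ the running point $\bT^{(i-1)}$ already carries updated coordinates, Step 3 says nothing about $l_\epsilon$ evaluated there, and since $l$ is not additively separable across coordinates (only the penalty is), assembling per-coordinate maximizers computed against $\bT^k$ can in general \emph{decrease} $l_\epsilon(\cdot,\bT^k)$. A sequential (Gauss--Seidel) sweep, in which coordinate $i$ is maximized with coordinates $1,\dots,i-1$ already replaced by their new values, is what would make your telescoping argument valid verbatim; the algorithm as written does not provide it.

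The gap can be closed without any coordinate bookkeeping by invoking the acceptance test instead: $\bT^{k+1}$ is only assigned in Step 6, i.e.\ after the trial point has passed $l(\bT)-l(\bT^k)\ge\rho\tau=\rho\|\bT-\bT^k\|^2_{l^2}$, which is precisely \eqref{eq:ineq} with $\rho$ in place of $\epsilon$---and $\rho$ is the constant actually substituted for $\epsilon$ when the theorem is used later (see the proof of Theorem \ref{Tsqh2}). If instead the theorem is meant with the current penalization parameter $\epsilon$, then neither your argument nor the paper's establishes it for the simultaneous update of Step 3; you would need to either prove the key inequality for a sequential sweep or restate the bound with $\rho$.
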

\begin{proof}
After Step 6 of Algorithm \ref{eq:SQH}, we have
$$
l_\epsilon(\bT^{k+1},\bT^k)  \geq l_\epsilon(\bT^{k},\bT^k) = l(\bm\theta^k).
$$
This implies
$$
l(\bT^{k+1}) - \epsilon\|\bT^{k+1} - \bT^k\|^2_{l^2} \geq l(\bT^k).
$$
Thus, we have
$$
l(\bT^{k+1}) - l(\bT^k) \geq  \epsilon\|\bT^{k+1} - \bT^k\|^2_{l^2}.
$$
\end{proof}

\begin{corollary} \label{CorollaryEps}
The sequence $(\epsilon)$ of the SQH iterates is bounded.
\end{corollary}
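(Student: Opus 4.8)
The plan is to reuse the increase estimate that underlies Theorem \ref{theoremSQH} and to observe that it forces the inner step-up loop (Steps 3--5 of Algorithm \ref{eq:SQH}) to terminate as soon as $\epsilon$ passes the fixed threshold $\rho$. First I would note that, exactly as in the proof of Theorem \ref{theoremSQH}, the vector $\bT$ computed in Step 3 satisfies $l_\epsilon(\bT,\bT^k) \geq l_\epsilon(\bT^k,\bT^k) = l(\bT^k)$, which rearranges to $l(\bT) - l(\bT^k) \geq \epsilon\,\tau$ with $\tau = \|\bT - \bT^k\|^2_{l^2} \geq 0$.

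The key consequence is then immediate: whenever $\epsilon \geq \rho$ we have $l(\bT) - l(\bT^k) \geq \epsilon\,\tau \geq \rho\,\tau$, so the sufficient-increase criterion tested in Steps 5--6 holds automatically. Hence the step-up rule of Step 5, which replaces $\epsilon$ by $\lambda\epsilon$ with $\lambda > 1$, can never be triggered once $\epsilon \geq \rho$. Within a single outer iteration the loop therefore exits no later than the first trial value reaching $\rho$; since the preceding trial was still below $\rho$ and each step multiplies by $\lambda$, the value of $\epsilon$ at which the iterate $\bT^{k+1}$ is accepted stays strictly below $\lambda\rho$ (or below $\rho$ if it already started there).

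It remains to control the value of $\epsilon$ passed between outer iterations. At the start of iteration $k$ the penalization parameter equals $\epsilon_0$ when $k=0$, and otherwise equals $\zeta$ times the value accepted in iteration $k-1$, where the step-down factor satisfies $\zeta \in (0,1)$. I would then argue by induction that every $\epsilon$ generated by the scheme obeys $\epsilon \leq M := \max(\epsilon_0,\lambda\rho)$. If the starting value of an iteration is already at least $\rho$, the loop exits at once and, because $\zeta < 1$, the accepted value does not exceed the previous accepted value; if the starting value is below $\rho$, the previous paragraph bounds the accepted value by $\lambda\rho \leq M$. As every intermediate trial lies between the starting and accepted values, all iterates stay in $(0,M]$, giving the claimed boundedness.

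The main obstacle is the bookkeeping in the last step: one must show that the upward pressure of $\lambda$ and the downward pull of $\zeta$ conspire to keep $\epsilon$ capped uniformly across all iterations, not merely within a single one. The inductive invariant $\epsilon \leq \max(\epsilon_0,\lambda\rho)$ is what makes this precise, and the estimate $l(\bT)-l(\bT^k) \geq \epsilon\,\tau$ is the engine that guarantees the cap at $\rho$.
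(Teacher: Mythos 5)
Your proposal is correct and rests on the same mechanism as the paper's proof: the Step-3 maximization gives $l(\bT)-l(\bT^k)\geq \epsilon\tau$, so the sufficient-increase test is automatically satisfied once $\epsilon\geq\rho$ and the step-up rule can no longer fire. You go further than the paper, which merely asserts the existence of an upper bound $\bar{\epsilon}$, by carrying out the induction across outer iterations and producing the explicit cap $\epsilon\leq\max(\epsilon_0,\lambda\rho)$; this is a welcome sharpening rather than a different route.
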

\begin{proof}
From Algorithm \ref{eq:SQH}, we see that the successful 
$k$-th maximization step is performed with $\epsilon= \, \rho$. Then there is a constant $\bar{\epsilon} $ depending on 
the data, and the optimization and SQH parameters, which 
provides an upper bound $\epsilon \le \bar{\epsilon}$ of the sequence $(\epsilon)$ of the SQH iterates.
\end{proof}

\begin{theorem}\label{Tsqh2}
Let the sequence $(\bT^k)$ be generated by Algorithm \ref{eq:SQH}. Then, the sequence of  function values $l(\bT^k)$ monotonically increases with 
$$
\lim_{k\to\infty}\left[l(\bT^{k+1})-l(\bT^k)\right]=0 \ \ \text{and} \ \ \lim_{k\to\infty}\left\Vert \bT^{k+1}-\bT^k\right\Vert_{l^2}=0.
$$ 
\end{theorem}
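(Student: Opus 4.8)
The plan is to derive both limits from a single fact---that the monotone sequence $\left(l(\bT^k)\right)$ converges to a finite limit---by combining the one-step ascent guarantee with the Monotone Convergence Theorem, and then bootstrapping from the vanishing of the successive function increments to the vanishing of the successive parameter increments. First I would record monotonicity: by Theorem \ref{theoremSQH}, every accepted iterate obeys $l(\bT^{k+1})-l(\bT^k)\ge \epsilon\,\|\bT^{k+1}-\bT^k\|^2_{l^2}\ge 0$ since $\epsilon>0$, so $\left(l(\bT^k)\right)$ is nondecreasing. Equivalently, monotonicity can be read directly off the acceptance test in Step 6 of Algorithm \ref{eq:SQH}, namely $l(\bT^{k+1})-l(\bT^k)\ge \rho\,\|\bT^{k+1}-\bT^k\|^2_{l^2}$ with the threshold $\rho>0$ fixed once and for all; I would keep this sharper version in reserve for the second limit.

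Next I would argue that $\left(l(\bT^k)\right)$ is bounded above. Since each survival factor satisfies $S_p(y_i|\boldsymbol x_i)\le 1$, the censored contributions $(1-\delta_i)\log S_p(y_i|\boldsymbol x_i)$ are nonpositive, so the only terms that could push $l$ to $+\infty$ are the density contributions $\delta_i\log f_p(y_i|\boldsymbol x_i)$; under the standing regularity on $F$ and $f$ these are bounded above, yielding $l(\bT^k)\le \sup_{\bT\in T_{ad}} l(\bT)=:l^\ast<\infty$ for all $k$. A nondecreasing sequence bounded above converges, so $l(\bT^k)\to l^\infty$ for some finite $l^\infty\le l^\ast$. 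Convergence immediately forces the successive increments to vanish, $\lim_{k\to\infty}\bigl[l(\bT^{k+1})-l(\bT^k)\bigr]=0$, which is the first assertion.

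For the second limit I would divide the acceptance inequality by the fixed constant $\rho$ to obtain $\|\bT^{k+1}-\bT^k\|^2_{l^2}\le \rho^{-1}\bigl[l(\bT^{k+1})-l(\bT^k)\bigr]$, whose right-hand side tends to $0$ by the previous step; hence $\|\bT^{k+1}-\bT^k\|_{l^2}\to 0$. I would deliberately route this estimate through $\rho$ rather than through Theorem \ref{theoremSQH} directly: the latter would require a \emph{positive lower} bound on $\epsilon$, which the algorithm does not provide, since $\epsilon$ is shrunk by the factor $\zeta\in(0,1)$ on every successful step, whereas Corollary \ref{CorollaryEps} only supplies an upper bound $\epsilon\le\bar\epsilon$.

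The step I expect to be the main obstacle is the boundedness-above claim on the noncompact admissible set $T_{ad}$, where $\boldsymbol\beta\in\mathbb{R}^p$ and $\boldsymbol\gamma\in(\mathbb{R}^+)^q$ are unbounded; monotonicity and the two limits then follow essentially for free from the Monotone Convergence Theorem and the acceptance inequality. The genuine content is ruling out $l(\bT^k)\to+\infty$, i.e.\ uniformly controlling the density terms $\delta_i\log f_p(y_i|\boldsymbol x_i)$. If a global bound is unavailable, I would instead confine the analysis to the super-level set $\{\bT\in T_{ad}: l(\bT)\ge l(\bT^0)\}$---which contains the entire sequence by monotonicity---and establish boundedness of $l$ on that set alone, which suffices for the argument above.
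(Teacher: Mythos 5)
Your proof is correct and follows essentially the same route as the paper's: monotonicity from the ascent inequality, convergence of the bounded monotone sequence $l(\bT^k)$, and control of $\|\bT^{k+1}-\bT^k\|^2_{l^2}$ by dividing the acceptance inequality by the fixed threshold $\rho$. Your two points of extra care --- routing the final estimate through $\rho$ rather than the possibly shrinking $\epsilon$, and flagging that boundedness of $l$ from above on the noncompact $T_{ad}$ is the real content --- correspond to steps the paper handles only implicitly (it writes ``choosing $\epsilon=\rho$'' and asserts ``since it is bounded above'' without justification), so they strengthen rather than change the argument.
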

\begin{proof}
From \eqref{eq:ineq}, choosing $\epsilon = \rho$, we obtain, 
$$
\|\bT^{k+1} - \bT^k\|^2_{l^2} \leq \dfrac{1}{\epsilon} \left[l(\bT^{k+1}) - l(\bT^k)\right] = \dfrac{1}{\rho} \left[l(\bT^{k+1}) - l(\bT^k)\right].
$$
This implies
$$
\sum_{k=0}^K \|\bT^{k+1} - \bT^k\|^2_{l^2} \leq \dfrac{1}{\rho} \left[l(\bT^{0}) - l(\bT^K)\right],
$$
which means that the series $\sum_{k=0}^\infty \|\bT^{k+1} - \bT^k\|^2_{l^2}$ is convergent. Thus, 
$$
\lim_{k\rightarrow \infty} \|\bT^{k+1} - \bT^k\|^2_{l^2} = 0.
$$
Since, 
$$
l(\bT^{k+1}) - l(\bT^k) \geq  \epsilon\|\bT^{k+1} - \bT^k\|^2_{l^2} \geq 0,
$$
we have 
$$
l(\bT^{k+1}) \geq l(\bT^k),~ \forall k \in \mathbb{N},
$$
which implies that the sequence $l(\bT^k)$ is monotonically increasing. Since it is bounded above, the sequence converges. As a result, we have
$$
\lim_{k\to\infty}\left[l(\bT^{k+1})-l(\bT^k)\right]=0.
$$
\end{proof}
Theorem \ref{Tsqh2} guarantees that Algorithm \ref{eq:SQH} is well defined for $\kappa>0$. Hence, there is an iteration number $k_0\in\mathbb{N}$ such that $\| \bT^{k_0+1}-\bT^{k_0}\|_{l^2}\leq\kappa$ and therefore the 
SQH algorithm stops in finitely many steps.

\section{Simulation study}

To assess the performance of the proposed SQH algorithm, we conducted a detailed Monte Carlo simulation study. For this, we set $tol = 0.001$ and $MaxIter=1000$. Regarding the fixed parameters associated with the development of the SQH algorithm, and after a careful preliminary study, we chose $\epsilon=1000$, $\lambda=1000$, $\rho=1000$, and $\zeta=0.5$. To compare the SQH algorithm's performance with the previously developed NCG algorithm as outlined by \cite{PalRoy23}, we used the same simulation setup as in their study. A brief description of the simulation setup is provided below. For the sake of comparison, and as done in \cite{PalRoy23}, we assume that $F(\cdot)$ in equation \eqref{Sp} and $f(\cdot)$ in equation \eqref{fp} correspond to the distribution function and density function of a two-parameter Weibull distribution, respectively. Thus, we have:
\begin{eqnarray}
F(y)&=&1-\exp\{-(\gamma_2 y)^{\frac{1}{\gamma_1}}\} \ \ \text{and} \nonumber \\ 
f(y) &=& \frac{1}{\gamma_1 y}(\gamma_2 y)^{\frac{1}{\gamma_1}}\{1-F(y)\} 
\label{wei}
\end{eqnarray}
for $y > 0$, $\gamma_1 > 0$, and $\gamma_2 > 0$. Additionally, $\boldsymbol{\gamma} = (\gamma_1, \gamma_2)^\prime$. It is important to note that the SQH algorithm can be easily extended to accommodate other semi-parametric or non-parametric methods for estimating $F(\cdot)$ and $f(\cdot)$, as well as to include covariates in $F(\cdot)$ or $f(\cdot)$.

\subsection{Presence of binary covariate} \label{sec:SA}

In this scenario, we consider a binary covariate $x$ (e.g., gender, presence or absence of an ulcer, etc.). This corresponds to a situation where patients are divided into two groups, with $x=1$ representing those in group 1 and $x=0$ representing those in group 2. The cured proportions for groups 1 and 2 are denoted by $p_{01}$ and $p_{00}$, respectively. Assuming the true values for the cured proportions $p_{01}$ and $p_{00}$ and a chosen value for $\alpha$, we can compute the true values of the regression parameters as follows:
 \begin{eqnarray*}
\beta_0& =& \begin{cases}
\log\bigg\{\frac{1}{\alpha}\bigg(\frac{1}{p_{00}^\alpha}-1\bigg)\bigg\}, & 0 < \alpha \leq 1\\ 
\log\{-\log(p_{00})\}, & \alpha = 0 
\end{cases}
\end{eqnarray*} 
and
\begin{eqnarray*}
\beta_1&=&\begin{cases}
\log\bigg\{\frac{1}{\alpha}\bigg(\frac{1}{p^\alpha_{01}}-1\bigg)\bigg\}-\beta_0, & 0 < \alpha \leq 1\\ 
\log\{-\log(p_{01})\}-\beta_0, & \alpha = 0.
\end{cases}
\end{eqnarray*}

For a given group, let the cured proportion be denoted by $p_0$. The following steps can be used to generate data from the BCT cure model in equation \eqref{Sp}:
\begin{itemize} 
\item[(i)] First, generate a random variable $U$ from a Uniform(0,1) distribution and a censoring time $C$ from an exponential distribution with rate $c$. Note that the censoring rate may differ between groups; 
\item[(ii)] If $U \leq p_0$, set the observed time $Y$ equal to the censoring time, i.e., $Y = C$; \item[(iii)] If $U > p_0$, equate the survival probability of susceptible patients, $\frac{S_p(y|\boldsymbol{x}) - p_0}{1 - p_0}$, to a new random variable $U^* \sim \text{Uniform}(0,1)$. This leads to the following equation:
\begin{eqnarray*}
&&
\begin{cases}
\frac{\{1-\alpha\phi(\alpha, x)F(t)\}^{\frac{1}{\alpha}}-p_0}{1-p_0}= U^*, & 0 < \alpha \leq 1  \\\\
\frac{\exp\{-\phi(0,x)F(t)\}-p_0}{1-p_0}= U^*, & \alpha = 0
\end{cases}\\
&\Rightarrow&
\begin{cases}
t=F^{-1}\bigg[\frac{1-\{p_0+(1-p_0)U^*\}^\alpha}{\alpha\phi(\alpha, x)}\bigg], & 0 < \alpha \leq 1  \\\\
t=F^{-1}\bigg[\frac{-\log\{p_0+(1-p_0)U^*\}}{\phi(0, x)}\bigg], & \alpha = 0,
\end{cases}
\end{eqnarray*}
where $F$ is the distribution function of the Weibull distribution, as defined in equation \eqref{wei}. The observed time can then be obtained as $Y = \min\{t, C\}$; 
\item[(iv)] Based on (ii) and (iii), if $Y = C$, set $\delta = 0$; otherwise, set $\delta = 1$.
\end{itemize}

We adopt the same parameter settings as in \cite{PalRoy23} (see also \cite{Pal18c}), which include three different sample sizes: $n=150$ ($n_1=75$, $n_2=75$), $n=200$ ($n_1=120$, $n_2=80$), and $n=300$ ($n_1=180$, $n_2=120$). We also consider two different sets of cured proportions $(p_{01}, p_{00})$: $(0.65, 0.35)$ and $(0.40, 0.20)$. Additionally, we set $(\gamma_1, \gamma_2)$ to $(0.316, 0.179)$ and the censoring rates for groups 1 and 2 to $(0.15, 0.10)$. For determining the initial values of the model parameters, we follow the same procedure outlined in \cite{Pal18c}.

\subsection{Presence of continuous covariate}

In this scenario, we consider a continuous covariate $x$ (e.g., tumor thickness measured in mm), which is generated from a Uniform(0.1, 20) distribution. Since we expect the cured proportion to decrease as tumor thickness increases, we set a low cured proportion of $p_{\text{low}} = 5\%$ for a patient with tumor thickness $x_{\text{max}} = 20$mm, and a high cured proportion of $p_{\text{high}} = 65\%$ for a patient with tumor thickness $x_{\text{min}} = 0.1$mm. Using these values and a fixed choice of $\alpha$, we can compute the true regression parameters as follows:
 \begin{eqnarray*}
\beta_1& =& \begin{cases}
\frac{\log\bigg\{\frac{1}{\alpha}\bigg(\frac{1}{p_{\text{low}}^\alpha}-1\bigg)\bigg\} - \log\bigg\{\frac{1}{\alpha}\bigg(\frac{1}{p_{\text{high}}^\alpha}-1\bigg)\bigg\}}{x_{max}-x_{min}}, & 0 < \alpha \leq 1\\\\
\frac{\log\{-\log(p_{\text{low}})\}-\log\{-\log(p_{\text{high}})\}}{x_{max}-x_{min}}, & \alpha = 0 
\end{cases}
\end{eqnarray*} 
and
\begin{eqnarray*}
\beta_0&=&\begin{cases}
\log\bigg\{\frac{1}{\alpha}\bigg(\frac{1}{p^\alpha_{\text{low}}}-1\bigg)\bigg\}-\beta_1x_{max}, & 0 < \alpha \leq 1\\ 
\log\{-\log(p_{\text{low}})\}-\beta_1x_{max}, & \alpha = 0.
\end{cases}
\end{eqnarray*}

For a patient with tumor thickness $x$ and a chosen true value of $\alpha$, the following steps are used to generate data from the BCT cure model in~\eqref{Sp}:

\begin{itemize}
\item[(i)] First, we calculate the cure rate as: 
\begin{eqnarray*}
p_0(x)&=&
\begin{cases}
\bigg\{\frac{1}{1+\alpha\exp(\beta_0+\beta_1x)}\bigg\}^{\frac{1}{\alpha}}, & 0 < \alpha \leq 1\\
\exp\{-\exp(\beta_0+\beta_1x)\}, & \alpha = 0;
\end{cases}
\end{eqnarray*}
\item[(ii)] Next, we generate a random variable $U^* \sim$ Uniform(0,1) and a censoring time $C$ from an exponential distribution with rate $c$; 
\item[(iii)] If $U^* \leq p_0(x)$, the observed time $Y$ is set to the censoring time, i.e., $Y = C$; 
\item[(iv)] If $U^* > p_0(x)$, we proceed as in step (iii) of section \ref{sec:SA}, replacing $p_0$ with $p_0(x)$; 
\item[(v)] Based on steps (iii) and (iv), if $Y = C$, we set $\delta = 0$; otherwise, we set $\delta = 1$. 
\end{itemize}

In this study, we use two different sample sizes: $n = 150$ and $n = 300$. The true values of the lifetime parameters and the method for finding initial values remain the same as in the simulation setup with the binary covariate.

\subsection{Simulation results}

Tables \ref{table:T1} and \ref{table:T2} present the simulation results for the BCT cure model with 
$\alpha=0.5$ and $\alpha=0.75$, respectively, with a binary covariate, in terms of absolute bias and RMSE. The SQH algorithm accurately converges to the true parameter values with small biases and RMSEs. Generally, both bias and RMSE decrease as the sample size increases, which is a desirable property. Compared to the NCG algorithm, the SQH algorithm consistently reduces bias and RMSE for all model parameters, highlighting its advantage. In Table \ref{table:T3}, we report the biases and RMSEs for cases where the index parameter $\alpha$ is on the boundary (i.e., $\alpha=0$ or 1). For other parameter settings, the results are similar and not presented for brevity. Even in the boundary cases, the SQH algorithm yields smaller biases and RMSEs than the NCG algorithm, further demonstrating its superiority. Table \ref{table:T4} presents results for a continuous covariate, with findings consistent with those in Tables \ref{table:T1}-\ref{table:T3}.

In Table \ref{table:Sp}, we present the biases and RMSEs of the estimates for the population survival function $S_p(y|x)$ in the general BCT model for $y=2$ and a specific value of the covariate $x$. Similar results for other values of $y$ and $x$, as well as for other parameter settings, are not reported for brevity. The results in Table \ref{table:Sp} again show that the SQH algorithm yields smaller biases and RMSEs for the estimated population survival function compared to the NCG algorithm. Notably, the biases and RMSEs produced by the NCG algorithm are very close to those from SQH.

We now focus on the bias and RMSE of the cure rates, presented in Table \ref{table:cure} for the general BCT cure model with a binary covariate. For the continuous covariate, the cure rate will vary for each subject, so it is not included here. Across all parameter settings, the SQH algorithm provides more accurate (i.e., smaller bias) and precise (i.e., smaller RMSE) estimates of cure rates compared to the NCG algorithm. Such accuracy and precision are clinically significant, as patients with high cure rates can avoid unnecessary high-intensity treatments, while those with low cure rates can receive timely interventions before the disease progresses to a more advanced stage, where treatment options are limited. Therefore, the SQH method, by providing accurate and precise cure rate estimates, plays a crucial role in treatment decisions and the development of effective adjuvant therapies, which are vital for improving patient survival.

In comparing algorithms, it is also essential to evaluate CPU times. For this, we used the general BCT cure model and present the CPU times (in seconds) in Table \ref{table:CPU}. Each reported time represents the total time taken by an algorithm to generate the estimation results (estimates, bias, and RMSE) over 500 Monte Carlo runs. We observe that the SQH algorithm performs faster for smaller sample sizes and higher cure rates, which contrasts with the behavior of the NCG algorithm. Overall, for all parameter settings considered, the SQH algorithm (being gradient-free) runs faster than the NCG algorithm. Notably, the SQH algorithm offers a significant reduction in runtime for smaller sample sizes compared to the NCG algorithm.

%From our simulation study results, it is clear that the overall performance of the SQH algorithm is significantly better than the NCG algorithm, and is thus preferred for the BCT cure model.

\begin{table} [htb!]
\caption{Comparison of SQH and NCG algorithms in terms of bias and RMSE for the BCT cure model with 
$\alpha=0.5$ and binary covariate}
\centering
%\small
\begin{tabular}{ r r r r r r r }\\ 
\hline                                 
$n$ & $(p_{01},p_{00})$ & Parameter &\multicolumn{2}{c}{Bias}  &\multicolumn{2}{c}{RMSE} \\  \cline{4-5}  \cline{6-7}
& & &SQH & NCG & SQH & NCG \\
\hline  

$150$ & $(0.40,0.20)$  & $\beta_0=0.905$                   & 0.089 &   0.102 &   0.103 &   0.141    \\  
 & &  $\beta_{1}=-0.755$                                    &  0.076 &   0.133 &   0.088 &   0.190    \\ 
 & &  $\gamma_1=0.316$                                       &  0.022 &   0.026 &   0.027 &   0.032   \\
 & &  $\gamma_2=0.179$                                       &  0.007  &  0.009 &   0.009 &   0.012 \\
 & &  $\alpha=0.500$                                          &  0.053 &   0.109 &   0.061 &   0.158  \\ [0.5ex] 
 
$200$ & $(0.40,0.20)$  & $\beta_0=0.905$                    & 0.093   & 0.097  &  0.106  &  0.118 \\
 & &  $\beta_{1}=-0.755$                                     & 0.075  &  0.097  &  0.087 &   0.131  \\ 
 & &  $\gamma_1=0.316$                                        &0.021   & 0.026  &  0.026 &   0.031 \\
 & &  $\gamma_2=0.179$                                        &0.007   & 0.011  &  0.009 &   0.015  \\
 & &  $\alpha=0.500$                                           &0.052   & 0.072  &  0.060 &   0.099 \\ [0.5ex]
 
 $300$ & $(0.40,0.20)$  & $\beta_0=0.905$                 &   0.090  &  0.094 &   0.105 &   0.109\\
 & &  $\beta_{1}=-0.755$                                   &  0.076  &  0.077  &  0.088  &  0.089 \\
 & &  $\gamma_1=0.316$                                     &  0.019  &  0.028  &  0.023  &  0.033 \\
 & &  $\gamma_2=0.179$                                     &  0.006  &  0.016  &  0.007  &  0.019 \\
 & &  $\alpha=0.500$                                       &  0.052  &  0.053  &  0.060  &  0.061  \\ [0.5ex]
 
 $150$ & $(0.65,0.35)$  & $\beta_0=0.322$            &    0.032 &   0.136 &   0.038 &   0.188  \\    
 & &  $\beta_{1}=-1.055$                             &    0.108  &  0.227 &   0.125 &   0.306   \\
 & &  $\gamma_1=0.316$                                &   0.024  &  0.032 &   0.029  &  0.040     \\ 
 & &  $\gamma_2=0.179$                                &    0.008 &   0.009 &    0.010 &   0.012 \\
 & &  $\alpha=0.500$                                  &   0.053  &  0.150  &  0.061   & 0.211    \\ [0.5ex] 
 
  $200$ & $(0.65,0.35)$  & $\beta_0=0.322$       &    0.032 &   0.114  &  0.037  &  0.162    \\      
 & &  $\beta_{1}=-1.055$                          &    0.110 &   0.179  &  0.126  &  0.243    \\     
 & &  $\gamma_1=0.316$                            &     0.023 &   0.027  &  0.028  &  0.034    \\    
 & &  $\gamma_2=0.179$                            &     0.007  &  0.009   & 0.009  &  0.012     \\   
 & &  $\alpha=0.500$                               &     0.055  &  0.121  &  0.062 &   0.174     \\ [0.5ex]  
 
  $300$ & $(0.65,0.35)$  & $\beta_0=0.322$    &     0.032  &  0.047   & 0.037  &  0.067     \\       
 & &  $\beta_{1}=-1.055$                      &      0.106 &   0.114  &  0.122  &  0.136      \\     
 & &  $\gamma_1=0.316$                        &      0.021 &   0.026  &  0.026  &  0.032       \\   
 & &  $\gamma_2=0.179$                        &       0.006 &   0.011 &   0.007 &   0.015       \\   
 & &  $\alpha=0.500$                          &     0.051   & 0.060   & 0.060   & 0.079     \\      
\hline
\end{tabular}
%\footnotesize{$^*$The EM results are taken from~\cite{PalBala2018}}
\label{table:T1}
\end{table}

%%%%%%%%%%%%%%%%%%%

\begin{table} [htb!]
\caption{Comparison of SQH and NCG algorithms in terms of bias and RMSE for the BCT cure model with 
$\alpha=0.75$ and binary covariate}
\centering
%\small
\begin{tabular}{ r r r r r r r }\\ 
\hline                                 
$n$ & $(p_{01},p_{00})$ & Parameter &\multicolumn{2}{c}{Bias}  &\multicolumn{2}{c}{RMSE} \\  \cline{4-5}  \cline{6-7}
& & &SQH & NCG & SQH & NCG \\
\hline  

$150$ & $(0.40,0.20)$  & $\beta_0=1.139$         &        0.115  &  0.121  &  0.132  &  0.155 \\
 & &  $\beta_{1}=-0.864$                         &        0.085  &  0.129  &  0.099  &  0.189 \\ 
 & &  $\gamma_1=0.316$                           &        0.023  &  0.027  &  0.028  &  0.033  \\ 
 & &  $\gamma_2=0.179$                           &        0.007  &  0.009  &  0.009  &  0.012  \\ 
 & &  $\alpha=0.750$                             &        0.078  &  0.107  &  0.090  &  0.140  \\ [0.5ex] 
 
$200$ & $(0.40,0.20)$  & $\beta_0=1.139$          &       0.113  &  0.115  &  0.130 &   0.138 \\
 & &  $\beta_{1}=-0.864$                          &       0.087  &  0.100  &  0.100 &   0.137  \\  
 & &  $\gamma_1=0.316$                            &       0.022  &  0.026  &  0.026 &   0.032   \\ 
 & &  $\gamma_2=0.179$                            &       0.007  &  0.012  &  0.008 &   0.016   \\ 
 & &  $\alpha=0.750$                              &       0.073  &  0.085  &  0.085 &   0.109   \\ [0.5ex]  
 
 $300$ & $(0.40,0.20)$  & $\beta_0=1.139$        &        0.113 &   0.116  &  0.130  &  0.133 \\
 & &  $\beta_{1}=-0.864$                         &        0.085 &   0.086  &  0.099  &  0.100 \\ 
 & &  $\gamma_1=0.316$                           &        0.018 &   0.029  &  0.023  &  0.034  \\
 & &  $\gamma_2=0.179$                           &        0.006 &   0.017  &  0.007  &  0.020  \\ 
 & &  $\alpha=0.750$                             &        0.074 &   0.076  &  0.086  &  0.090 \\ [0.5ex]  
 
 $150$ & $(0.65,0.35)$  & $\beta_0=0.468$       &        0.048  &  0.136 &   0.055  &  0.183  \\
 & &  $\beta_{1}=-1.144$                        &        0.114  &  0.218 &   0.132  &  0.297  \\
 & &  $\gamma_1=0.316$                          &      0.024    &  0.033 &   0.029  &  0.040  \\
 & &  $\gamma_2=0.179$                          &     0.008     &  0.009 &   0.010  &  0.011  \\
 & &  $\alpha=0.750$                            &     0.076     &  0.155 &   0.088  &  0.210 \\ [0.5ex] 
 
  $200$ & $(0.65,0.35)$  & $\beta_0=0.468$      &     0.047   &  0.119  &  0.054  &  0.160 \\
 & &  $\beta_{1}=-1.144$                        &      0.116  &  0.176  &  0.133  &  0.236 \\ 
 & &  $\gamma_1=0.316$                          &     0.023   &  0.028  &  0.027  &  0.034 \\
 & &  $\gamma_2=0.179$                          &    0.007    &  0.009  &  0.009  &  0.012 \\
 & &  $\alpha=0.750$                            &      0.076  &  0.126  &  0.088  &  0.174 \\
 
  $300$ & $(0.65,0.35)$  & $\beta_0=0.468$    &      0.046   &  0.058  &  0.053  &  0.076 \\
 & &  $\beta_{1}=-1.144$                      &       0.116  &  0.118  &  0.133  &  0.141 \\
 & &  $\gamma_1=0.316$                        &      0.021   &  0.027  &  0.025  &  0.033 \\
 & &  $\gamma_2=0.179$                        &       0.006  &  0.012  &  0.007  &  0.015 \\
 & &  $\alpha=0.750$                          &       0.074  &  0.079  &  0.086  &  0.096   \\
\hline
\end{tabular}
\label{table:T2}
\end{table} 

%%%%%%%%%%%%%%%

\begin{table} [htb!]
\caption{Comparison of SQH and NCG algorithms in terms of bias and RMSE for the BCT cure model with 
$\alpha$ on the boundary (i.e., $\alpha=0$ or 1) and binary covariate}
\centering
%\small
\begin{tabular}{ r r r r r r r }\\ 
\hline                                 
$n$ & $(p_{01},p_{00})$ & Parameter &\multicolumn{2}{c}{Bias}  &\multicolumn{2}{c}{RMSE} \\  \cline{4-5}  \cline{6-7}
& & &SQH & NCG & SQH & NCG \\
\hline  

$200$ & $(0.40,0.20)$  & $\beta_0=1.386$         &        0.140  &  0.138  &  0.161  &  0.162 \\
 & &  $\beta_{1}=-0.981$                         &       0.100   &  0.102  &  0.114  &  0.121 \\
 & &  $\gamma_1=0.316$                           &       0.022   &  0.028  &  0.026  &  0.034 \\
 & &  $\gamma_2=0.179$                           &        0.007  &  0.012  &  0.008  &  0.015 \\
 & &  $\alpha=1$                                 &      0.097    &  0.102  &  0.113  &  0.120 \\ [0.5ex]
 
$200$ & $(0.65,0.35)$  & $\beta_0=0.619$          &     0.065  &   0.091  &  0.074  &  0.119 \\
 & &  $\beta_{1}=-1.238$                          &      0.129 &   0.147  &  0.148  &  0.190 \\
 & &  $\gamma_1=0.316$                            &      0.022 &   0.028  &  0.027  &  0.034 \\ 
 & &  $\gamma_2=0.179$                            &     0.008  &   0.009  &  0.010  &  0.012 \\
 & &  $\alpha=1$                                  &     0.103  &   0.118  &  0.121  &  0.153 \\ [0.5ex]
 
 $200$ & $(0.40,0.20)$  & $\beta_0=0.476$        &        0.046  &  0.057  &  0.054  &  0.073 \\
 & &  $\beta_{1}=-0.563$                         &      0.055    &  0.073  &  0.064  &  0.096 \\
 & &  $\gamma_1=0.316$                           &      0.021    &  0.025  &  0.026  &  0.030 \\
 & &  $\gamma_2=0.179$                           &         0.007 &  0.011  &  0.008  &  0.015 \\
 & &  $\alpha=0$                                 &       0.013   &  0.029  &  0.015  &  0.065 \\ [0.5ex]
 
 $200$ & $(0.65,0.35)$  & $\beta_0=0.049$       &      0.008 &   0.077  &  0.010  &  0.128  \\
 & &  $\beta_{1}=-0.891$                        &      0.090 &   0.133  &  0.104  &  0.193  \\ 
 & &  $\gamma_1=0.316$                          &      0.022 &   0.027  &  0.027  &  0.033 \\
 & &  $\gamma_2=0.179$                          &    0.007   &   0.009  &  0.010  &  0.012 \\
 & &  $\alpha=0$                                &    0.010   &   0.064  &  0.011  &  0.145 \\

\hline
\end{tabular}
\label{table:T3}
\end{table} 

%%%%%%%%%%%%%%%%%%%%%

\begin{table} [htb!]
\caption{Comparison of SQH and NCG algorithms in terms of bias and RMSE for the BCT cure model with continuous covariate}
\centering
%\small
\begin{tabular}{ r r r r r r }\\ 
\hline                                 
$n$ &  Parameter &\multicolumn{2}{c}{Bias}  &\multicolumn{2}{c}{RMSE} \\  \cline{3-4}  \cline{5-6}
 & &SQH & NCG & SQH & NCG \\
\hline  

$150$   & $\beta_0=-0.746$                         &         0.075  &  0.138  &  0.086 &   0.215 \\
 &  $\beta_{1}=0.134$                              &         0.014  &  0.021  &  0.018 &   0.030 \\
  &  $\gamma_1=0.316$                              &         0.022  &  0.025  &  0.027 &   0.031 \\
  &  $\gamma_2=0.179$                              &         0.007  &  0.010  &  0.009 &   0.013 \\
  &  $\alpha=0.500$                                &         0.050  & 0.096   & 0.058  &  0.160  \\ [0.5ex]

 $300$   & $\beta_0=-0.746$                       &       0.071  &  0.073  &  0.082  &  0.085 \\
  &  $\beta_{1}=0.134$                            &       0.012  &  0.014  &  0.015  &  0.016 \\
  &  $\gamma_1=0.316$                             &       0.017  &  0.029  &  0.021  &  0.034 \\
  &  $\gamma_2=0.179$                             &       0.005  &  0.018  &  0.007  &  0.022 \\
  &  $\alpha=0.500$                               &       0.051  &  0.053  &  0.060  &  0.061 \\ [0.5ex]
 
 $150$  & $\beta_0=-0.692$                       &        0.069 &   0.117 &   0.080  &  0.201 \\
  &  $\beta_{1}=0.156$                           &        0.017 &   0.023 &   0.020  &  0.032 \\ 
  &  $\gamma_1=0.316$                            &        0.020 &   0.026 &   0.025  &  0.032 \\
  &  $\gamma_2=0.179$                            &        0.007 &   0.010 &   0.008  &  0.014 \\
  &  $\alpha=0.750$                              &        0.078 &   0.108 &   0.089  &  0.156 \\ [0.5ex]

  $300$   & $\beta_0=-0.692$                 &       0.069  &  0.071  &  0.080  &  0.082   \\
  &  $\beta_{1}=0.156$                       &        0.014 &   0.016 &   0.018 &   0.019 \\
  &  $\gamma_1=0.316$                        &        0.018 &   0.030 &   0.022 &   0.035 \\
  &  $\gamma_2=0.179$                        &        0.005 &   0.019 &   0.006 &   0.022  \\
  &  $\alpha=0.750$                          &        0.075 &   0.077 &   0.087 &   0.089 \\
\hline
\end{tabular}
\label{table:T4}
\end{table}

%%%%%%%%%%%%%%%%%%%%%%%%

\begin{table} [htb!]
\caption{Comparison of SQH and NCG algorithms in terms of bias and RMSE of the estimate of $S_p(y|x)$ for the BCT cure model with $n=200$}
\centering
\setlength{\tabcolsep}{10pt}
\begin{tabular}{ r r r r | r r }\\ 
\hline                                 
$x$ & $(p_{01},p_{00},\alpha)$ &\multicolumn{2}{c}{Bias of $S_p(y=2|x)$}  &\multicolumn{2}{c}{RMSE of $S_p(y=2|x)$} \\  \cline{3-4}  \cline{5-6}
& &SQG & NCG & SQH & NCG \\
\hline  

$1$ & $(0.65,0.35,0.5)$        &    0.004    &   0.005   &    0.005   &    0.006 \\
 
$0$  & $(0.65,0.35,0.5)$       &    0.008   &    0.009   &     0.010   &    0.012 \\

$1$ & $(0.65,0.35,0.75)$      &     0.004   &    0.005  &     0.005  &     0.006 \\
 
$0$  & $(0.65,0.35,0.75)$     &    0.007  &     0.009   &    0.009    &   0.011 \\

$1$ & $(0.40,0.20,0.5)$      &    0.006   &    0.007   &    0.008   &    0.009 \\
 
$0$  & $(0.40,0.20,0.5)$     &     0.009   &    0.011   &    0.011   &    0.013 \\

$1$ & $(0.40,0.20,0.75)$     &    0.006    &   0.006   &    0.007  &     0.008 \\
 
$0$  & $(0.40,0.20,0.75)$    &    0.008   &    0.009   &     0.010   &    0.011 \\

\hline
\end{tabular}
\label{table:Sp}
\end{table}

%%%%%%%%%%%%%%%%%%%%%

\begin{table} [htb!]
\caption{Comparison of SQH and NCG algorithms in terms of bias and RMSE of the estimate of cure rate for the BCT cure model with binary covariate} 
\centering
\begin{tabular}{ r r r r r r r }\\ 
\hline                                 
$\alpha$ & $n$ & $p_0$ &\multicolumn{2}{c}{Bias}  &\multicolumn{2}{c}{RMSE} \\  \cline{4-5}  \cline{6-7}
& & &SQH & NCG & SQH & NCG \\
\hline  

$0.500$ & $200$  & $p_{01} = 0.400$        &          0.033  &  0.040  &  0.040 &   0.050    \\      
 & &  $p_{00} = 0.200$                     &          0.022  &  0.026  &  0.026 &   0.034     \\ [0.5ex]  
 
$0.500$ & $300$  & $p_{01} = 0.400$        &          0.032 &   0.034  &  0.039 &   0.042     \\ 
 & & $p_{00} = 0.200$                      &          0.021  &  0.022  &  0.025  &  0.027  \\ [0.5ex]

$0.750$ & $200$  & $p_{01} = 0.400$        &          0.035 &   0.039 &   0.044 &   0.049     \\     
 & &  $p_{00} = 0.200$                     &          0.023  &  0.025 &   0.027  &  0.032     \\ [0.5ex]   
 
$0.750$ & $300$  & $p_{01} = 0.400$        &          0.033  &  0.035 &   0.041  &  0.044   \\   
 & & $p_{00} = 0.200$                      &           0.022 &   0.024  &  0.027  &  0.029 \\ [0.5ex]

$0.500$ & $200$  & $p_{01} = 0.650$        &            0.028  &  0.045 &   0.032 &   0.056   \\ 
 & &  $p_{00} = 0.350$                     &            0.012 &   0.042  &  0.015  &  0.056    \\ [0.5ex]      
 
$0.500$ & $300$  & $p_{01} = 0.650$        &            0.028  &  0.031  &  0.033  &  0.037   \\     
 & &  $p_{00} = 0.350$                     &             0.012 &   0.017 &   0.015 &   0.025  \\ [0.5ex]

$0.750$ & $200$  & $p_{01} = 0.650$        &             0.029  &  0.044  &  0.034  &  0.055    \\ 
 & &  $p_{00} = 0.350$                     &              0.016 &   0.040 &   0.019 &   0.053   \\ [0.5ex]      
 
$0.750$ & $300$  & $p_{01} = 0.650$        &             0.029   & 0.031  &  0.034  &  0.038    \\    
 & &  $p_{00} = 0.350$                     &              0.016 &   0.019  &  0.019  &  0.025  \\
\hline
\end{tabular}
\label{table:cure}
\end{table}

%%%%%%%%%%%%%%%%%

\begin{table} [htb!]
\caption{Comparison of SQH and NCG algorithms in terms of CPU times (in seconds)}
\renewcommand{\arraystretch}{1.1}
\centering
%\small
\begin{tabular}{ r r r r}\\ 
\hline                                 
$n$ & $(\alpha,p_{01},p_{00})$ &\multicolumn{2}{c}{CPU Time (seconds)} \\  \cline{3-4} 
& & SQH & NCG \\
\hline  
$200$ & $(0.500,0.400,0.200)$ &    3.310        &   30.717         \\[0.5ex]
$300$ & $(0.500,0.400,0.200)$ &    4.468       &     7.318       \\[0.5ex]
$200$ & $(0.750,0.400,0.200)$ &    3.124        &     24.995       \\[0.5ex]
$300$ & $(0.750,0.400,0.200)$ &    4.414         &    6.898        \\[0.5ex]

$200$ & $(0.500,0.650,0.350)$ &     2.996       &    86.168        \\[0.5ex]
$300$ & $(0.500,0.650,0.350)$ &     3.896       &    31.186        \\[0.5ex]
$200$ & $(0.750,0.650,0.350)$ &     2.961       &    79.945        \\[0.5ex]
$300$ & $(0.750,0.650,0.350)$ &     3.833       &    23.534        \\[0.5ex]

\hline
\end{tabular}
\label{table:CPU}
\end{table}

\subsubsection{Discussion on the fixed parameters in the SQH algorithm}

As mentioned earlier, the fixed parameters in the SQH algorithm ($\epsilon$, $\lambda$, $\rho$, and $\zeta$) were selected after a careful preliminary study, with all studies based on 500 Monte Carlo runs. We observed that the estimation results remained similar for different values of $\zeta \in (0,1)$; see Table \ref{table:Sen} for a sensitivity analysis with respect to $\zeta$ when $\epsilon$, $\lambda$ and $\rho$ are all fixed at 1000. However, we found that the CPU time decreased as $\zeta$ increased. For example, with the setting ($n=200, p_{01}=0.65, p_{00}=0.35, \alpha=0.5$), the CPU times (in seconds) for $\zeta=0.1$, $\zeta=0.5$, and $\zeta=0.9$ were 3.497, 3.042, and 2.939, respectively. We also conducted a sensitivity analysis for $\lambda (> 1)$, keeping $\epsilon$, $\rho$, and $\zeta$ fixed at 1000, 1000, and 0.5, respectively. With the setting  ($n=200, p_{01}=0.40, p_{00}=0.20, \alpha=0.5$), we varied $\lambda$ within the range $[1.1,1000]$ and observed that the estimation results remained consistent. CPU times for different values of $\lambda$ were similar, with $\lambda=1.1$ resulting in 3.348 seconds and $\lambda=1000$ yielding the smallest CPU time of 3.262 seconds. Next, we performed a sensitivity analysis for $\rho (>0)$, keeping $\epsilon$, $\lambda$, and $\zeta$ fixed at 1000, 1000, and 0.5, respectively. With the setting ($n=200, p_{01}=0.40, p_{00}=0.20, \alpha=0.5$) and $\rho$ varying within $[0.1,1000]$, we observed very minor variation in the parameter estimates. The CPU times ranged from 3.282 to 3.438 seconds. Finally, we conducted a sensitivity analysis for $\epsilon (> 0)$, keeping $\lambda$, $\rho$, and $\zeta$ fixed at 1000, 1000, and 0.5, respectively, and varying $\epsilon$ within $[0.1,1000]$. For $\epsilon \in (0,40)$ and with the setting ($n=200, p_{01}=0.40, p_{00}=0.20, \alpha=0.5$), we observed some divergent cases, with the number of divergent samples decreasing as $\epsilon$ increased, and no divergences occurred for $\epsilon\geq 40$. The estimation results were similar for all values of $\epsilon$, and the CPU time ranged from 3.349 to 5.156 seconds, with $\epsilon=1000$ yielding the smallest CPU time.

\begin{table} [htb!]
\caption{Sensitivity analysis of the SQH algorithm with respect to the parameter $\zeta \in (0,1)$}
\centering
%\small
\begin{tabular}{r r r r r | r r | r r}\\ 
\hline                                 
$n$ & $(p_{01},p_{00})$ &  Parameter &\multicolumn{2}{c}{$\zeta=0.1$}  &\multicolumn{2}{c}{$\zeta=0.5$} &\multicolumn{2}{c}{$\zeta=0.9$} \\  \cline{4-5}  \cline{6-7} \cline{8-9}
& & &Bias & RMSE & Bias & RMSE & Bias & RMSE  \\
\hline  

200 & $(0.65,0.35)$   & $\beta_0=0.322$      &      0.032 &   0.038  &   0.032  &  0.037  &   0.032  &  0.037 \\
  &                   &  $\beta_{1}=-1.055$    &      0.110 &   0.127  &   0.110  &  0.126  &   0.110  &  0.126 \\
  &                   &  $\gamma_1=0.316$     &      0.023 &   0.028  &   0.023  &  0.028  &   0.022  &  0.027 \\
  &                   &  $\gamma_2=0.179$     &      0.007 &   0.009  &   0.007  &  0.009  &   0.007  &  0.009 \\
  &                   &  $\alpha=0.500$       &      0.055 &   0.062  &   0.055  &  0.062  &   0.055  &  0.062 \\ [0.75ex] \hline

200 & $(0.65,0.35)$   & $\beta_0=0.468$       &     0.047 &   0.055  &   0.047  &  0.054   &    0.047  &  0.055 \\
  &                   &  $\beta_{1}=-1.144$   &     0.117 &   0.134  &   0.116  &  0.133   &    0.116  &  0.133 \\
  &                   &  $\gamma_1=0.316$     &     0.023 &   0.028  &   0.023  &  0.027   &    0.022  &  0.027 \\
  &                   &  $\gamma_2=0.179$     &     0.007 &   0.009  &   0.007  &  0.009   &    0.007  &  0.009 \\
  &                   &  $\alpha=0.750$       &     0.076 &   0.088  &   0.076  &  0.088   &    0.076  &  0.088 \\ [0.75ex] \hline

  200 & $(0.40,0.20)$   & $\beta_0=0.905$    &      0.094 &   0.107  &  0.093  &  0.106  &   0.094  &  0.107 \\
  &                   &  $\beta_{1}=-0.755$    &     0.075 &   0.087  &  0.075  &  0.087  &   0.075  &  0.087 \\
  &                   &  $\gamma_1=0.316$     &      0.022 &   0.027  &  0.021  &  0.026  &   0.021  &  0.026 \\
  &                   &  $\gamma_2=0.179$     &      0.007 &   0.008  &  0.007  &  0.009  &   0.007  &  0.009 \\
  &                   &  $\alpha=0.500$       &      0.052 &   0.060  &  0.052  &  0.060  &   0.052  &  0.060 \\ [0.75ex] \hline

  200 & $(0.40,0.20)$   & $\beta_0=1.139$     &      0.114  &  0.131  &  0.113  &  0.130  &   0.113  &  0.130 \\
  &                   &  $\beta_{1}=-0.864$   &      0.087  &  0.100  &  0.087  &  0.100  &   0.087  &  0.100 \\
  &                   &  $\gamma_1=0.316$     &      0.022  &  0.027  &  0.022  &  0.026  &   0.021  &  0.026 \\
  &                   &  $\gamma_2=0.179$     &      0.006  &  0.008  &  0.007  &  0.008  &   0.007  &  0.008 \\
  &                   &  $\alpha=0.750$       &      0.073  &  0.086  &  0.073  &  0.085  &   0.073  &  0.085 \\

\hline
\end{tabular}
\label{table:Sen}
\end{table}

\section{Application: analysis of cutaneous melanoma data}

To demonstrate the application of the proposed SQH algorithm in the BCT cure rate model, we used the well-known cutaneous melanoma dataset, which was also employed by \cite{PalRoy23} to illustrate the NCG algorithm. This dataset is part of an assay on cutaneous melanoma, a type of malignant cancer, used to evaluate the performance of postoperative treatment with a high dose of interferon alpha-2b. After excluding patients with missing covariate data, the dataset contains complete information on 417 patients observed between 1991 and 1995, who were followed until 1998. The data includes 56\% right-censored observations. The observed time, defined as the time (in years) until death or censoring, has a mean of 3.18 years and a standard deviation of 1.69 years. Following \cite{PalRoy23}, we considered the nodule category as the sole covariate $(x)$, with categories: $1: n_1=111, 2: n_2 = 137, 3: n_3 = 87,$ and $4: n_4 = 82$.

To compute the initial values of the model parameters for starting the SQH algorithm, we first calculated the non-parametric estimates of the cured proportions for nodule categories 1 and 4 by using the Kaplan-Meier estimate of the survival function at the largest observed lifetime. These estimates were then equated to $p_0(x)$ with $x=1$ and 4, as described in \eqref{p0}, for different values of $\alpha$ in the interval $[0,1]$. By solving the resulting equations for different values of $\alpha$, we obtained a set of values for the regression parameters $\beta_0$ and $\beta_1$. For each pair $(\beta_0,\beta_1)$ and its corresponding value of $\alpha$, we calculated the log-likelihood function value. To do this, we equated the mean and variance of the Weibull density function in \eqref{wei} to the mean and variance of the observed data to determine the parameters $\gamma_1$ and $\gamma_2$. Finally, we selected the set of model parameters that maximized the log-likelihood value as our initial guess.

Using the initial values derived from the technique described above, a Weibull distribution fitted for the lifetime as in \eqref{wei}, and the SQH parameters $\epsilon$, $\lambda$, $\rho$, and $\zeta$ set to 1000, 1000, 1000, and 0.5, respectively, the SQH estimates obtained were $\hat{\beta_0} = -1.228$, $\hat{\beta_1} = 0.386$, $\hat{\gamma_1} = 0.561$, $\hat{\gamma_2} = 0.376$, and $\hat{\alpha} = 0.051$. These estimates are consistent with those reported in \cite{PalRoy23}. To compute the standard errors, a bootstrap method was employed as recommended by \cite{PalRoy23}. The non-parametric bootstrap standard error estimates, based on a sample size of 500, were $\text{s.e}(\hat{\beta_0}) = 0.115$, $\text{s.e}(\hat{\beta_1}) = 0.024$, $\text{s.e}(\hat{\gamma_1}) = 0.022$, $\text{s.e}(\hat{\gamma_2}) = 0.024$, and $\text{s.e}(\hat{\alpha}) = 0.008$. Note that the SQH-based standard errors are smaller compared to the NCG-based standard errors reported by \cite{PalRoy23}. With these SQH parameter estimates, the calculated cure rates for the four nodule categories were $\hat{p}_{01}=0.653$, $\hat{p}_{02}=0.536$, $\hat{p}_{03}=0.402$, and $\hat{p}_{04}=0.266$. These estimates align with the findings of \cite{Bal16} and \cite{PalRoy23}.

In Figure \ref{figure:SP}, we present the estimated population survival function, stratified by nodule category, and overlay it with the corresponding non-parametric Kaplan-Meier survival function estimates. To assess the adequacy of the BCT cure model with Weibull lifetimes using the SQH estimates, we calculated the normalized randomized quantile residuals. Figure \ref{figure:QQ} displays the QQ plot, where each point represents the median of five sets of ordered residuals. The plot indicates that the BCT cure model with Weibull lifetimes provides a good fit to the melanoma data. Additionally, we formally tested the normality of the residuals using the Kolmogorov-Smirnov test, which yielded a p-value of 0.933, offering strong evidence in favor of the normality of the residuals.
\begin{figure}[htb!]
\centering
\includegraphics[width=0.6\linewidth]{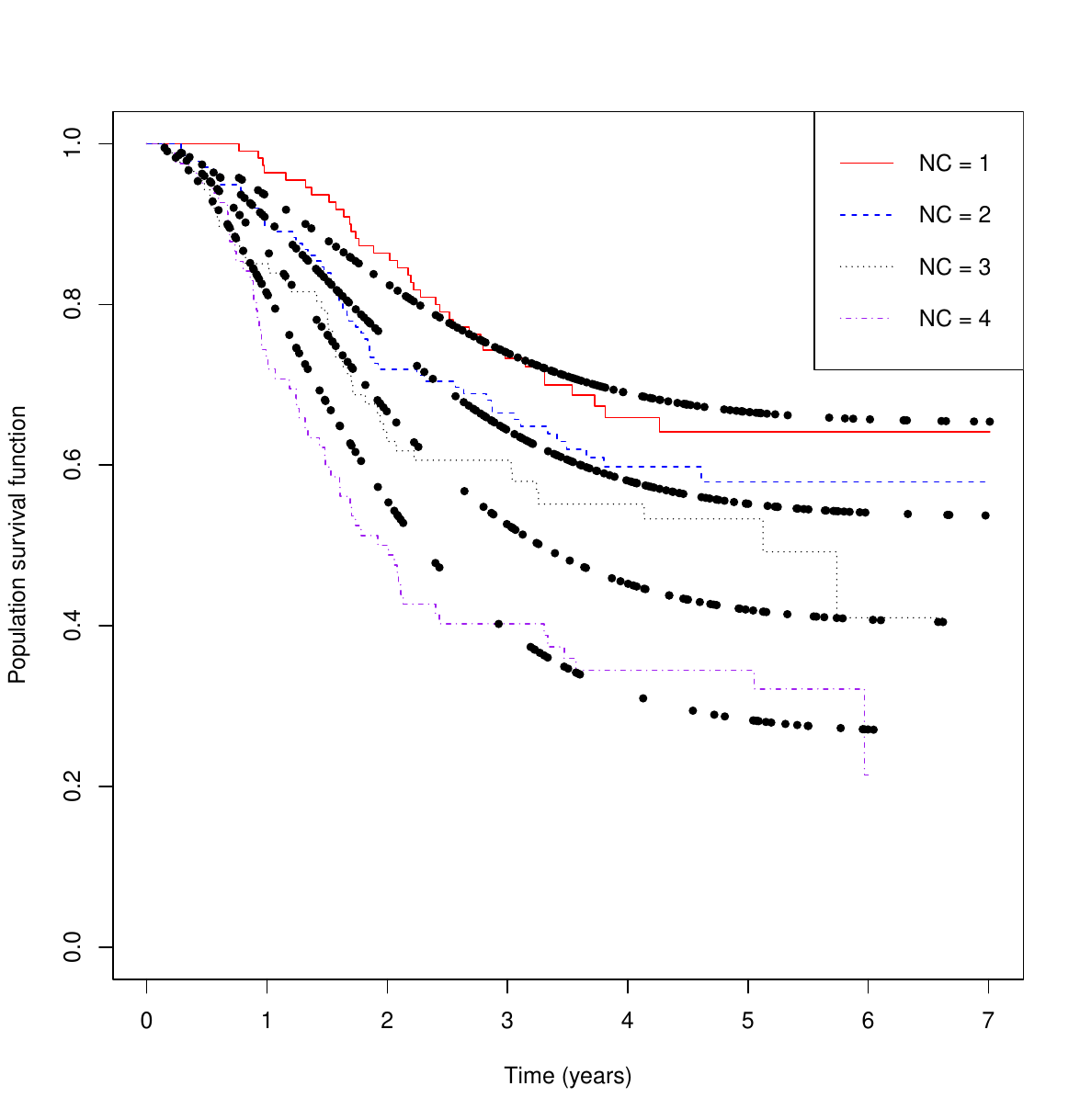}
\caption{Plots of the estimated population survival function stratified by nodule category (NC)}
\label{figure:SP}
\end{figure}
\begin{figure}[htb!]
\centering
\includegraphics[width=0.5\linewidth]{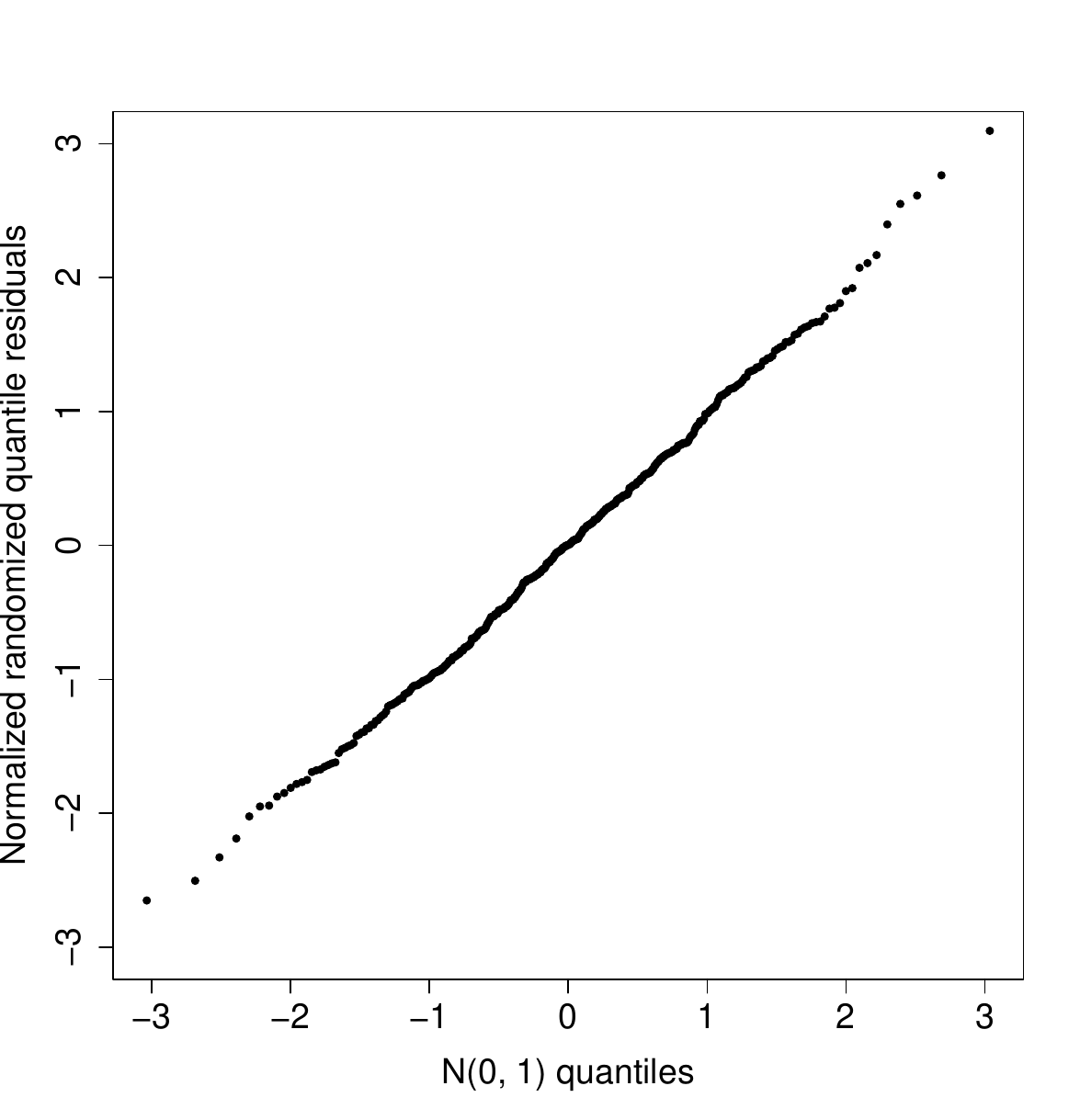}
\caption{QQ plot of normalized randomized quantile residuals}
\label{figure:QQ}
\end{figure}

\section{Conclusion and future work}

In this paper, we focused on parameter estimation for the BCT cure model. We proposed an improved estimation method, the SQH algorithm, which is based on a gradient-free technique. While we applied the SQH algorithm to the maximum likelihood estimation of the BCT cure model parameters, it can be used for other maximum likelihood estimation problems. Compared to the recently developed NCG algorithm (which has already been shown to outperform the EM algorithm), the SQH algorithm yields lower bias and RMSE for all model parameters, leading to more accurate estimates of the cure rates. We also compared CPU times and found that the SQH algorithm runs faster than the NCG algorithm across all parameter settings considered in our simulation study. In the analysis of the cutaneous melanoma data, we observed that the SQH-based standard errors are smaller compared to the NCG-based standard errors.

The SQH algorithm could be valuable for other complex cure rate models, such as the Conway-Maxwell (COM) Poisson cure rate model and the destructive COM-Poisson cure rate model \citep{Bal16,Majak19}, where a profile likelihood approach has been suggested for estimating the COM-Poisson shape parameter $\phi$ due to the flatness of the likelihood surface with respect to $\phi$. It would be interesting to evaluate the performance of the SQH algorithm in these cases and compare it with the NCG algorithm, the EM algorithm, or its variations \citep{Davies21,Pal21SIM}. We are currently investigating these issues and plan to present our findings in a future paper.

\section*{Funding}

The research reported in this publication was supported by the National Institute Of General Medical Sciences of the National Institutes of Health under Award Number R15GM150091. The content is solely the responsibility of the authors and does not necessarily represent the official views of the National Institutes of Health. The work of S. Pal and S. Roy was also partly supported by the US National Science Foundation grant number 2309491.

\section*{Conflict of interest}

The authors declare no potential conflict of interest.

\section*{Data Availability Statement}

The R codes for the data generation and the SQH algorithm are available in the GitHub page \url{https://github.com/suvrapal/SQH-Box-Cox}.

\bibliographystyle{apacite}

\bibliography{Ref}%

\end{document}